\newcommand*{\textlabel}[2]{%
  \edef\@currentlabel{#1}
  \phantomsection
  #1\label{#2}
}
\newtheoremstyle{custom}
  {3pt}
  {3pt}
  {\slshape}
  {}
  {\bfseries}
  {.}
  { }
   {}
\theoremstyle{custom}
\newtheorem{theorem}{Theorem}[section]
\newtheorem{proposition}[theorem]{Proposition}
\newtheorem{proposition/definition}[theorem]{Proposition/Definition}
\newtheorem{lemma}[theorem]{Lemma}
\newtheorem{corollary}[theorem]{Corollary}
\theoremstyle{definition}
\newtheorem{definition}[theorem]{Definition}
\newtheorem{example}[theorem]{Example}
\theoremstyle{remark}
\newtheorem{remark}[theorem]{Remark}
\newtheoremstyle{exercise}
  {3pt}
  {6pt}
  {}
  {}
  {\bfseries}
  {:}
  { }
   {}
\theoremstyle{exercise}
\newtheorem{exercise}[theorem]{Exercise}
\newtheoremstyle{exercises}
  {3pt}
  {6pt}
  {}
  {}
  {\bfseries}
  {:}
  {\newline}
   {}
\theoremstyle{exercise}
\newtheorem{exercises}[theorem]{Exercises}
\def\boxit#1{\vbox{\hrule height1pt\hbox{\vrule width1pt\kern3pt
  \vbox{\kern3pt#1\kern3pt}\kern3pt\vrule width1pt}\hrule height1pt}}
\def\trank{\text{rank}}
\def\bv{\bold v}
\def\BC{\mathbb C}
\def\BP{\mathbb P}
\def\tdim{{\rm dim}}
\def\hd{,...,}
\def\ww{\wedge}
\def\upperp{{}^\perp}
\def\cS{{\mathcal S}}
\def\11{\mathbf 1}
\def\PP{\mathbb P}
\def\l{\lambda}
\def\o{\omega}
\def\b{\beta}
\def\s{\sigma}
\def\ot{{\mathord{ \otimes } }}
\def\otc{{\mathord{\otimes\cdots\otimes}\;}}
\def\ra{{\mathord{\;\rightarrow\;}}}
\def\dim{{\rm dim}\;}
\def\La#1{\Lambda^{#1}}
\def\EE#1#2{E^{#1}_{#2}}
\def\s{\sigma}
\def\b{\beta}
\def\l{\lambda}
\def\BP{\mathbb  P}
\def\BC{\mathbb  C}
\def\hd{, \hdots ,}
\def\La#1{\Lambda^{#1}}
\def\ur{\underline {\bold R}}
\def\ra{\rightarrow}
\def\tdim{\operatorname{dim}}
\def\tker{\operatorname{ker}}
\def\tlim{\lim}
\def\tmin{\operatorname{min}}
\def\trank{\operatorname{rank}}
\def\upperp{{}^{\perp}}
\def\ww{\wedge}
\def\bbb{{\bold{b}}}
\def\be{\begin{equation}}
\def\ene{\end{equation}}
\def\aaa{{\bold {a}}}
\def\bbb{{\bold {b}}}
\def\ccc{{\bold {c}}}
\DeclareMathOperator{\tlog}{log}
\DeclareMathOperator\tspan{span}
\def\p{{\bold P}}
\def\rank{\operatorname{rank}}
\def\tspan{{\rm span}}
\newcommand{\Id}{\operatorname{Id}}
\def\EE{\mathcal{E}}
\def\Mn{M_{\langle \nnn \rangle}}
\def\Mnw{M_{\langle \nnn,\nnn,  \bw\rangle}}
\def\Mtwo{M_{\langle 2\rangle}}
\def\trank{{\mathrm {rank}}}
\def\aaa{{\bold a}}\def\bbb{{\bold b}}\def\ccc{{\bold c}}
\def\uuu{\bold u}
\newcommand{\GL}{\operatorname{GL}}
\def\mmm{\bold m}\def\nnn{\bold n}
\def\bv{\bold v}\def\bw{\bold w}
\def\BRe{B}
 \def\Mnwredl{M_{\langle \nnn,\nnn,\bw \rangle}^{\l}} \def\Mnwredlp{M_{\langle \nnn,\nnn,\bw \rangle}^{\l'}}
\begin{document}

\author{J.M. Landsberg}
\address{
Department of Mathematics\\
Texas A\&M University\\
Mailstop 3368\\
College Station, TX 77843-3368, USA}
\email{jml@math.tamu.edu}
\author{Mateusz Micha{\l}ek}
\address{
Polish Academy of Sciences\\
         ul. \'Sniadeckich 8\\
         00-956 Warsaw\\
         Poland}
\email{wajcha2@poczta.onet.pl}
\thanks{Landsberg    supported by   NSF grant  DMS-1405348. Michalek was supported by Iuventus Plus grant 0301/IP3/2015/73 of the Polish Ministry of Science.}
\title[The border rank of matrix multiplication]{A $2\nnn^2-\tlog_2(\nnn)-1$ 
lower bound for the border rank of matrix multiplication}
\keywords{matrix multiplication complexity, border rank, tensor, commuting matrices, Strassen's equations, Koszul flattening,
substitution method, MSC 68Q17, 14L30, 15A69}
\maketitle
\begin{abstract}
Let $\Mn\in \BC^{\nnn^2}\ot\BC^{\nnn^2}\ot\BC^{\nnn^2}$ denote the
matrix multiplication tensor for $\nnn\times\nnn$ matrices.
We use the border substitution method \cite{MR3025382,2016arXiv160604253B,2015arXiv150403732L} combined with
Koszul flattenings \cite{MR3376667} to prove
the border rank lower bound $\ur(M_{\langle \nnn,\nnn,\bw\rangle})\geq 2\nnn^2 - \lceil \tlog_2(\nnn)\rceil -1$.  
\end{abstract}

\section{Introduction}

Let $A,B,C,U,V,W$ be vector spaces of dimensions $\aaa,\bbb,\ccc,\uuu,\bv,\bw$.
The matrix multiplication tensor
$M_{\langle \uuu,\bv,\bw\rangle}\in (U^*\ot V)\ot (V^*\ot W)\ot (W^*\ot U)$ is given in 
coordinates by 
$$
M_{\langle \uuu,\bv,\bw\rangle}
=\sum_{i=1}^{\uuu}\sum_{j=1}^{\bv}\sum_{k=1}^{\bw}
x^i_j\ot y^j_k\ot z^k_i.
$$
Ever since Strassens' discovery \cite{Strassen493} that the
standard algorithm for multiplying matrices is not optimal, the matrix multiplication
tensor has been a central object of study.
We write $\Mn=M_{\langle \nnn,\nnn,\nnn\rangle}$.

Let $T\in A\ot B\ot C$ be a tensor. The {\it rank} of $T$ is the smallest $r$ such that 
$T$ may be written as a sum of $r$ rank one tensors (tensors of the form $a\ot b\ot c$ for $a\in A,b\in B,c\in C$).
The {\it border rank} of $T$ is the smallest $r$ such that $T$ may be written as a limit of rank $r$ tensors.
We write $\ur(T)=r$.
Border rank is a basic measure of the complexity of a tensor.
For example, the exponent of matrix multiplication, the smallest $\o$ such that $\nnn\times \nnn$ matrix multiplication
can be computed with $O(\nnn^{\o})$ arithmetic operations, satisfies 
  $ \o= \underline{\tlim}_{\nnn\ra\infty} \tlog_{\nnn}(\ur(\Mn))$.  All modern upper and lower bounds for the complexity of matrix multiplication rely implicitly or explicitly on border rank.  
  Strassen showed $\ur(\Mn)\geq \frac{3\nnn^2}2$ \cite{Strassen505}
and Lickteig improved this to $\ur(\Mn)\geq \frac{3\nnn^2}2+\frac\nnn 2 -1$ \cite{MR86c:68040}.
After that, progress stalled for nearly thirty years (other than showing $\ur(\Mtwo)=7$ \cite{MR2188132}), until
in 2012 the first author and Ottaviani  showed $\ur(\Mn)\geq 2\nnn^2-\nnn$ \cite{MR3376667}.
In 2016 we improved this to $\ur(\Mn)\geq 2\nnn^2-\nnn+1$ \cite{2016arXiv160108229L}. More important than the result in  \cite{2016arXiv160108229L} was
the method of proof - a border rank version of the {\it substitution method} \cite{MR3025382,2016arXiv160604253B,2015arXiv150403732L}.
We use this method in  a more refined way to prove:

\begin{theorem}\label{mmultthm}  Let $0<m<\nnn$. 
Then   
$$
\ur(\Mnw)\geq 2\nnn\bw -\bw+m-
\lfloor \frac{\bw\binom{\nnn-1+m}{ m-1}}{\binom{2\nnn-2}{\nnn-1}}\rfloor.
$$
In particular,  taking $\bw=\nnn$ and  $m=\nnn- \lceil\tlog_2(\nnn)\rceil -1$, 
$$
\ur(\Mn)\geq 2\nnn^2-  \lceil\tlog_2(\nnn)\rceil -1.
$$
\end{theorem}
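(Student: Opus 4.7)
The strategy is to combine the border substitution method of \cite{2016arXiv160108229L} with the Koszul flattening of \cite{MR3376667}, using the substitution in a more refined way than in the earlier $2\nnn^2-\nnn+1$ bound. Suppose $\ur(\Mnw)=r$. The plan is first to trade $m$ units of border rank for the ability to replace $\Mnw$ by a ``substituted'' tensor with a smaller third factor, and then to apply the Koszul flattening to that substituted tensor while controlling carefully how much its Koszul image can lose compared to that of $\Mnw$.

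Concretely, I would invoke the border substitution in the factor $C=W^*\ot U$: for a choice of $m$ directions of $C$, it produces a subspace $C'\subset C$ of codimension $m$ together with a tensor $T'\in A\ot B\ot C'$ (obtained by substituting those $m$ directions against a chosen piece of $A\ot B$) such that $\ur(T')\le r-m$. The freedom in choosing which $m$ directions to substitute will be crucial below. Next I apply the Koszul flattening $T_A^{\wedge \nnn-1}\colon \Lambda^{\nnn-1}A\ot B^*\to \Lambda^{\nnn}A\ot C'$ to $T'$ (with $A=U^*\ot V$) together with the standard bound $\ur(T')\ge \mathrm{rank}(T'^{\wedge \nnn-1}_A)/\binom{\dim A-1}{\nnn-1}$. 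The Landsberg--Ottaviani rank computation applied to the unsubstituted $\Mnw$ yields, after dividing by the wedge binomial, the baseline figure $2\nnn\bw-\bw$.

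The heart of the proof is to bound the loss in the Koszul flattening rank caused by the substitution. Each substituted direction in $C$ can in principle remove as much as $\binom{2\nnn-2}{\nnn-1}$ from the Koszul image, so the naive loss $m\binom{2\nnn-2}{\nnn-1}$ is too crude by a factor of order $\nnn$. Instead I would choose the $m$ substitutions jointly, so that their combined effect on the Koszul image is confined to a natural ambient subspace whose dimension is at most $\bw\binom{\nnn-1+m}{m-1}$. Averaging over the available choices and applying pigeonhole then produces one substitution whose Koszul loss is at most $\lfloor \bw\binom{\nnn-1+m}{m-1}/\binom{2\nnn-2}{\nnn-1}\rfloor$, yielding
$$
r-m\ \ge\ \ur(T')\ \ge\ 2\nnn\bw-\bw-\left\lfloor\frac{\bw\binom{\nnn-1+m}{m-1}}{\binom{2\nnn-2}{\nnn-1}}\right\rfloor,
$$
which is the stated inequality.

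For the specialization to $\Mn$, set $\bw=\nnn$ and $m=\nnn-\lceil \tlog_2(\nnn)\rceil-1$. The required numerical check is $\nnn\binom{\nnn-1+m}{m-1}<\binom{2\nnn-2}{\nnn-1}$, which follows from an elementary ratio estimate: passing from $\binom{2\nnn-2}{\nnn-1}$ down to $\binom{\nnn-1+m}{m-1}$ involves roughly $\lceil \tlog_2(\nnn)\rceil+1$ stepwise factors each close to $2$, so the overall ratio exceeds $2^{\lceil \tlog_2(\nnn)\rceil}\ge \nnn$. Thus the floor vanishes and the bound becomes $2\nnn^2-\nnn+m=2\nnn^2-\lceil \tlog_2(\nnn)\rceil-1$. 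The principal obstacle is the Koszul-loss estimate in the third paragraph: producing a correlated family of $m$ substitutions whose joint effect on the Koszul image fits inside an ambient space of the claimed size requires a detailed analysis of the $\GL$-module structure of the Koszul flattening of $\Mnw$, and a careful averaging argument over a sufficiently rich family of admissible substitutions.
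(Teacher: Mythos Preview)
Your proposal has the right high-level architecture (border substitution followed by Koszul flattening), but there is a genuine gap in the crucial step, and it stems from a misunderstanding of what the border substitution method gives you.

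You write that ``the freedom in choosing which $m$ directions to substitute will be crucial'' and later that ``averaging over the available choices and applying pigeonhole then produces one substitution whose Koszul loss is at most\ldots''. But Proposition~\ref{subsprop} does not give you this freedom: it only guarantees the \emph{existence} of some $m$-dimensional subspace along which border rank drops by $m$, and you must then lower-bound the Koszul rank of the resulting tensor for \emph{that} subspace, whatever it is. An averaging argument that finds \emph{one} good substitution is useless, because the substitution you are handed may be the worst one. What you actually need is a \emph{uniform} upper bound on the Koszul loss, valid for every subspace the border substitution might produce.

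The paper closes this gap not by averaging but by exploiting the symmetry of $\Mnw$. The substitution is performed in the factor $A=U^*\ot V$ (not $C$), and the stabilizer $GL(U)\times GL(V)$ acts on the set of admissible substitution directions; since $\BRe_k$ is closed and group-invariant (Lemma~\ref{lem:closed}), one may use a Borel subgroup to degenerate any sequence of $m$ substitution directions to the coordinate form $U_\lambda=\tspan\{u^i\ot v_j:(i,j)\in\lambda\}$ for some Young diagram $\lambda$ with $m$ boxes. The problem is thus reduced to bounding the Koszul loss uniformly over all such $\lambda$. After the projection $u^i\ot v_j\mapsto e_{i+j-1}$ to $\BC^{2\nnn-1}$, the loss from box $(i,j)$ is exactly $g(i,j)=\binom{\nnn-i+j-1}{j-1}\binom{\nnn+i-j-1}{i-1}$, and the combinatorial core of the proof is the inequality $\sum_{(i,j)\in\lambda}g(i,j)\le\binom{\nnn-1+m}{m-1}$, with equality when $\lambda$ is a single row or column. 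None of this structure---the Borel normalization, the Young-diagram parametrization, or the explicit maximization of $\sum g(i,j)$---appears in your outline, and the vague ``ambient subspace of dimension $\bw\binom{\nnn-1+m}{m-1}$'' is precisely the output of this analysis rather than something one can simply posit.

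Your treatment of the ``In particular'' clause is fine and matches the paper's.
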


 \medskip
As can be seen in the proof, one can get a slightly better lower bound. 
Here are a few cases with optimal $m$ and the improvement over the previous bound:
\begin{align*}
\nnn\ \ \   &\ur(\Mn)\geq \ \ \ &{\rm improvement\ over\ }2\nnn^2-\nnn +1\\
4\ \ \   & 29 & 0\\
5\ \ \   & 47 & 1\\
6\ \ \   & 69 & 2\\
7\ \ \   & 95 & 3\\
8\ \ \   & 122 & 3\\
9\ \ \   & 158 & 4\\
10\ \ \   & 196 & 6\\
100\ \ \   & 19,992 & 92\\
1000\ \ \   & 1,999,989 & 989\\
10,000\ \ \ & 199,999,985 & 9985
\end{align*}

\medskip

The substitution and border substitution methods na\"\i vely could be used
to prove rank and border rank lower bounds up to   $3\mmm-3$ for tensors in $\BC^\mmm\ot \BC^\mmm\ot\BC^\mmm$.
We show this is not quite possible  for border rank. We define a variety $X(\aaa',\bbb',\ccc')\subset \BP (A\ot B\ot C)$
that corresponds to tensors where the border substitution method 
fails
to provide lower bounds beyond $\aaa+\bbb+\ccc-\aaa'-\bbb'-\ccc'$. 
More
precisely, $X(\aaa',\bbb',\ccc')$ is the variety of {\it $(\aaa',\bbb',\ccc')$-compressible tensors}, those for
which there exists $A'\subset A^*$, $B'\subset B^*$, $C'\subset C^*$, 
respectively
of dimensions $\aaa',\bbb',\ccc'$, such that $T$,
considered as a linear form on  $A^*\ot B^*\ot C^*$, satisfies $T\mid_{A'\ot B'\ot C'}=0$.
We show:

\begin{proposition}\label{prop:fill}
The set  $X(\aaa',\bbb',\ccc')\subseteq \BP
(A\ot B\ot C)$ is Zariski closed. If
\be\label{surjdim}\aaa \aaa ' + \bbb\bbb'+\ccc\ccc'
<(\aaa')^2+(\bbb')^2+(\ccc')^2 + \aaa'\bbb'\ccc'
\ene
then $X(\aaa',\bbb',\ccc')\subsetneq\BP (A\ot B\ot C)$.
In particular, in the range where \eqref{surjdim} holds,   the   substitution 
methods  may be used to prove nontrivial
lower bounds for 
border rank.
\end{proposition}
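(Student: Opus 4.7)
The plan is to realize $X(\aaa',\bbb',\ccc')$ as the image, under a projection, of an incidence variety over a product of Grassmannians: closedness then follows from properness, and a dimension count for the incidence variety will give the inequality \eqref{surjdim} on the nose.

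First I would set $\mathcal{G}:=G(\aaa',A^*)\times G(\bbb',B^*)\times G(\ccc',C^*)$ and form the incidence variety
$$I:=\{(A',B',C',[T])\in\mathcal{G}\times\BP(A\ot B\ot C)\ :\ T|_{A'\ot B'\ot C'}=0\}.$$
Since for each fixed triple of subspaces the vanishing condition is linear in $T$, the set $I$ is the zero locus of a natural section of a rank $\aaa'\bbb'\ccc'$ vector bundle on $\mathcal{G}\times\BP(A\ot B\ot C)$ (the bundle whose fiber over $(A',B',C')$ is $(A'\ot B'\ot C')^*$), so $I$ is Zariski closed. By construction $X(\aaa',\bbb',\ccc')$ is the image of $I$ under the projection to the second factor; since $\mathcal{G}$ is projective, this projection is closed, and hence $X(\aaa',\bbb',\ccc')$ is Zariski closed.

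Next I would compute $\dim I$ via the projection $\pi_1:I\to\mathcal{G}$. For every $(A',B',C')\in\mathcal{G}$ the restriction map $A\ot B\ot C\to (A'\ot B'\ot C')^*$ is dual to the inclusion $A'\ot B'\ot C'\hookrightarrow A^*\ot B^*\ot C^*$ and is therefore surjective, so $\pi_1$ has constant fiber dimension $\aaa\bbb\ccc-\aaa'\bbb'\ccc'-1$. Adding the dimensions of the three Grassmannians gives
$$\dim I=\aaa'(\aaa-\aaa')+\bbb'(\bbb-\bbb')+\ccc'(\ccc-\ccc')+\aaa\bbb\ccc-\aaa'\bbb'\ccc'-1,$$
and a direct rearrangement shows that $\dim I<\aaa\bbb\ccc-1=\dim\BP(A\ot B\ot C)$ is precisely the condition \eqref{surjdim}. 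Under this hypothesis the projection cannot be surjective, so $X(\aaa',\bbb',\ccc')$ is a proper subvariety. The only step requiring any attention is the constancy of the fiber dimension, which rests on the elementary fact that the restriction map is surjective for every triple of subspaces, not just generically; the remainder is bookkeeping.
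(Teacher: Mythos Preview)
Your proof is correct and is essentially identical to the paper's: the paper defines the same incidence variety $Y\subset\bold G\times\BP(A\ot B\ot C)$, observes that each fiber over $\bold G$ is a $\BP^{\aaa\bbb\ccc-\aaa'\bbb'\ccc'-1}$, and concludes by the same dimension count and projectivity of $\bold G$. You have simply spelled out in slightly more detail why the fibers have constant dimension and why the incidence variety is closed.
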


The proof and examples  show that beyond this bound one expects $X(\aaa',\bbb',\ccc')=\BP (A\ot B\ot C)$, so
that the method cannot be used. 

Note that
if  $\ur(T)\leq \aaa+\bbb+\ccc-(\aaa'+\bbb'+\ccc')$ then there exists $A'\subset A^*,B'\subset B^*,C'\subset C^*$  such that $T|_{A'\ot B'\ot C'}=0$.
Let $\s_r(Seg(\BP A\times \BP B\times \BP C))\subset \BP (A\ot B\ot C)$
denote the variety of tensors of border rank at most $r$, called
the $r$-th secant variety of the Segre variety.
The above remark  may be restated as 
\begin{proposition}
  $$\sigma_{\aaa+\bbb+\ccc-(\aaa'+\bbb'+\ccc')}Seg(\BP A\times \BP B\times \BP C)\subset X(\aaa',\bbb',\ccc').$$
\end{proposition}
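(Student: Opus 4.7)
The proposition is essentially a geometric reformulation of the remark stated immediately above it. The plan is to first verify the inclusion for honest rank, and then promote it to border rank by invoking the Zariski closedness of $X(\aaa',\bbb',\ccc')$ established in Proposition~\ref{prop:fill}.

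For the rank step, suppose $T$ has rank at most $r:=\aaa+\bbb+\ccc-(\aaa'+\bbb'+\ccc')$ and write $T=\sum_{i=1}^{r} a_i\ot b_i\ot c_i$. The key numerical identity is
$$r=(\aaa-\aaa')+(\bbb-\bbb')+(\ccc-\ccc'),$$
which lets us partition the summation indices into three consecutive blocks $I_A,I_B,I_C$ of those sizes. The span of $\{a_i:i\in I_A\}$ has dimension at most $\aaa-\aaa'$, so its annihilator in $A^*$ contains an $\aaa'$-dimensional subspace $A'$; define $B'\subset B^*$ and $C'\subset C^*$ analogously from $I_B$ and $I_C$. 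Then for any $(\alpha,\beta,\gamma)\in A'\times B'\times C'$, each summand of $\sum_{i=1}^{r}\alpha(a_i)\beta(b_i)\gamma(c_i)$ is killed by at least one of its three factors according to whether $i$ lies in $I_A$, $I_B$, or $I_C$. Hence $T|_{A'\ot B'\ot C'}=0$ and $T\in X(\aaa',\bbb',\ccc')$.

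To pass from rank to border rank, I would simply observe that $\sigma_r(Seg(\BP A\times\BP B\times\BP C))$ is by definition the Zariski closure of the locus of tensors of rank at most $r$, which by the previous paragraph is contained in $X(\aaa',\bbb',\ccc')$. Since the latter set is Zariski closed by Proposition~\ref{prop:fill}, it contains the closure, yielding the stated inclusion.

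No serious obstacle is anticipated: the only conceptual ingredient is the three-way partition of the $r$ summands matched to annihilators in the three dual spaces, and the remainder is a one-line application of closedness. The only mild care required is in selecting $A',B',C'$ to have exactly the prescribed dimensions $\aaa',\bbb',\ccc'$, but the relevant annihilators have at least those dimensions so one can always restrict to subspaces of the required size.
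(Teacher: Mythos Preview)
Your argument is correct. The paper does not actually supply a proof of this proposition: it is presented merely as a geometric restatement of the preceding ``Note that\ldots'' observation, and that observation is itself asserted without justification. Your three-block partition of the rank-one summands, followed by the closure step via Proposition~\ref{prop:fill}, is precisely the standard way to fill in these omitted details, so there is nothing to compare---you have written out what the paper left implicit.
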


We expect the inequality in Proposition \ref{prop:fill} to be sharp or nearly
so. For tensors in $\BC^{\mmm}\ot \BC^{\mmm}\ot \BC^{\mmm}$ the
limit of this method alone would be a border rank lower bound of
$3(m-\sqrt{3m +\frac 94}+\frac 32)$.
    However, it is unlikely the method
alone could attain such a bound due to technical difficulties in proving an explicit tensor does not belong to $X(\aaa',\bbb',\ccc')$. 

The state of the art for matrix multiplication is such that 
on one hand, for
upper bounds on the exponent  there does not appear
to be a viable path proposed for proving the exponent is less than
2.3, but on the other,   none of
the existing  techniques appear to be able to prove a border rank lower
bound
of $2n^2$ for matrix multiplication. 
 
\subsection{Acknowledgements} We thank Jason Starr and Math Overflow for help with Example \ref{jasonex}.  Michalek is a member of the AGATES group and is supported by the Foundation for Polish Science (FNP).

\section{Preliminaries}

Let $A=U^*\ot V$, $B=V^*\ot W$, $C=U\ot W^*$.
For $v\in V$, we write $\hat v\subset V$ for the line it determines
and $[v]\in \BP V$ for the  corresponding  point in projective space. 
 
\begin{definition}
For a  tensor $T\in V_1\ot\dots\ot V_n$, and $U\subset V_1$,  let $T/U \in (V_1/U)\ot V_2\ot\dots\ot V_n$
denote $T\mid_{U\upperp \ot V_2^*\otc V_{n}^*}$, where we consider $T$ as a linear form on $V_1^*\otc V_n^*$.
Define 
$$\BRe_k(T):=\{[v]\in \PP V_1\mid   \ur(T/\hat v)\leq k\}.
$$
\end{definition}
\begin{lemma}\label{lem:closed}
Let $T\in V_1\ot\dots\ot V_n$ be a tensor, let  $G_T\subset \GL(V_1)\times\dots\times \GL(V_n)$ denote its stabilizer
and let $G_1\subset GL(V_1)$ denote its projection to $GL(V_1)$. The set $\BRe_k(T)$ is:
\begin{enumerate}
\item Zariski closed,
\item a $G_1$-variety.
 \end{enumerate}
\end{lemma}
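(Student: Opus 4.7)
The plan is to handle (1) and (2) separately, since they reflect different aspects of the definition.

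For (1), the basic idea is that border rank $\leq k$ cuts out a Zariski closed subvariety in any fixed tensor space (namely $\sigma_k$ of the Segre), and that $T/\hat v$ varies algebraically with $[v]$. I would cover $\BP V_1$ by affine charts of the form $\{[v_0+h]:h\in H\}$, where $H\subset V_1$ is a hyperplane and $v_0\notin H$. On such a chart, $H$ projects isomorphically onto $V_1/\langle v_0+h\rangle$, and unwinding the definition of $T/\hat v$ in a basis of $V_1$ adapted to $V_1=\langle v_0\rangle\oplus H$ makes $T/\hat v$ a polynomial (in fact linear) map from the chart to the \emph{fixed} tensor space $H\ot V_2\ot\cdots\ot V_n$. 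Since $\sigma_k$ is Zariski closed there, its preimage in the chart is closed, and a finite cover of $\BP V_1$ by such charts gives global closedness. Equivalently, one could phrase the same argument globally: $T$ defines a section of the bundle $\mathcal{Q}\ot V_2\ot\cdots\ot V_n$ over $\BP V_1$ (with $\mathcal{Q}$ the universal rank-$(\dim V_1-1)$ quotient), and the border rank $\leq k$ locus forms a Zariski closed subvariety of the total space of this bundle, verified through the local trivialization above.

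For (2), I would show directly that $G_1$ permutes the fibers of the operation $v\mapsto T/\hat v$ in a border-rank-preserving way. Pick $g=(g_1,\ldots,g_n)\in G_T$, so that $(g_1\ot\cdots\ot g_n)\cdot T=T$. Since $g_1$ carries $\hat v$ isomorphically onto $\hat{g_1 v}$, it induces a linear isomorphism $\bar g_1:V_1/\hat v\to V_1/\hat{g_1 v}$, and the tensor product
\[
\bar g_1\ot g_2\ot\cdots\ot g_n:(V_1/\hat v)\ot V_2\ot\cdots\ot V_n\longrightarrow (V_1/\hat{g_1 v})\ot V_2\ot\cdots\ot V_n
\]
is an isomorphism of tensor spaces. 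The stabilizer condition $g\cdot T=T$ (equivalently $T\circ g_1^T=(g_2\ot\cdots\ot g_n)\circ T$ after viewing $T$ as a linear form on $V_1^*\ot\cdots\ot V_n^*$), together with the identity $g_1^T(\hat{g_1 v}^\perp)=\hat v^\perp$, implies that this isomorphism sends $T/\hat v$ to $T/\hat{g_1 v}$. Linear isomorphisms preserve border rank, so $\ur(T/\hat v)=\ur(T/\hat{g_1 v})$; hence $[v]\in\BRe_k(T)$ iff $[g_1 v]\in\BRe_k(T)$. Combined with (1), this finishes the lemma.

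No step of this plan looks substantively hard. The only bookkeeping items needing care are verifying the identity $(\bar g_1\ot g_2\ot\cdots\ot g_n)(T/\hat v)=T/\hat{g_1 v}$ in (2), and checking in (1) that the chosen local trivialization genuinely makes $T/\hat v$ polynomial (rather than merely rational) in the chart coordinates; both are standard unwindings of the definitions of restriction and quotient.
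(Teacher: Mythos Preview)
Your proposal is correct and follows essentially the same approach as the paper: for (1) the paper uses exactly your global formulation (the section of the quotient bundle tensored with $V_2\ot\cdots\ot V_n$, intersected with the fiberwise $\sigma_k$ locus), and for (2) the paper gives the one-line version $\ur(T/\hat v)=\ur(gT/\hat{g_1 v})=\ur(T/\hat{g_1 v})$ of your more detailed computation. Your local-chart argument in (1) is a welcome unpacking of why the section is algebraic, but it is not a different route.
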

\begin{proof}
(1)
Let $\mathcal{L}$ be the total space of the quotient bundle over $\PP V_1 $ tensored
with $V_2\otimes\dots\otimes V_n$, i.e.~the fiber over $[v]$ is $(V_1/v)\ot V_2\ot\dots\ot V_n$. 
We have a natural section $s:\PP V_1 \rightarrow \mathcal{L}$ defined by $s([v]):=T/v$.
Let  $X\subset \mathcal{L}$ 
denote the sub-bundle whose 
fiber over $[v]\in  \PP V_1 $  is the locus of tensors of border rank at most $k$ in $(V_1/v)\ot V_2\ot\dots\ot V_n$. 
The set $\BRe_k(T)$ is the projection to $\PP V_1 $ of the intersection of the  image of the  section $s$ and $X$.

(2) Let $g=(g_1,\dots,g_n)\in G_T$. Then  $\ur(T/v)=\ur(gT/g_1v)=\ur(T/g_1v)$.
\end{proof}

A tensor $T\in A\ot B\ot C$ is {\it $A$-concise} if it is not
contained in any $A'\ot B\ot C$ where $A' \subsetneq  A$.

\begin{proposition}\label{subsprop} \cite{2016arXiv160604253B,2015arXiv150403732L}
Let $T\in A\ot B\ot C$ be $A$-concise. Fix $\aaa'\leq \aaa$.
Then 
$$
\ur(T)\geq \tmin_{A'\in G(\aaa', A^*)}\ur(T|_{A'\ot B^*\ot C^*})+ (\aaa-\aaa').
$$
\end{proposition}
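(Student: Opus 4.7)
The plan is to take a border rank decomposition $T = \lim_{t\to 0} T(t)$ with $T(t) = \sum_{i=1}^r a_i(t) \ot b_i(t) \ot c_i(t)$ and $r = \ur(T)$, then produce for each small $t$ an $\aaa'$-dimensional subspace $A'(t) \subset A^*$ that visibly annihilates $\aaa - \aaa'$ of the summands. The restricted tensor then has rank at most $r - (\aaa - \aaa')$ by construction. By compactness of the Grassmannian I pass to a subsequence along which $A'(t) \to A'$, and use upper semicontinuity of border rank to conclude $\ur(T|_{A' \ot B^* \ot C^*}) \leq r - (\aaa - \aaa')$, which is the required inequality.

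The first step is to exploit $A$-conciseness. The contraction map $T_A : B^* \ot C^* \to A$ is surjective, and surjectivity is an open condition, so $T(t)_A$ is surjective for all $t \neq 0$ sufficiently small. Its image is contained in the linear span of the $a_i(t)$, which must therefore span $A$. I then pick an index set $I(t) \subset \{1,\dots,r\}$ of size $\aaa$ for which $\{a_i(t)\}_{i \in I(t)}$ is a basis of $A$. Since there are only finitely many such subsets, $I(t)$ is constant along some sequence $t_n \to 0$, and after relabeling I take $I = \{1,\dots,\aaa\}$. Let $\{\phi_j(t_n)\}_{j=1}^\aaa \subset A^*$ be the dual basis and set $A'(t_n) := \mathrm{span}(\phi_{\aaa - \aaa' + 1}(t_n), \dots, \phi_\aaa(t_n))$, an $\aaa'$-plane in $A^*$. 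By duality, the linear form $a_i(t_n)$ restricted to $A'(t_n)$ vanishes for every $i \in \{1,\dots,\aaa - \aaa'\}$, so $T(t_n)|_{A'(t_n) \ot B^* \ot C^*}$ is visibly a sum of at most $r - (\aaa - \aaa')$ rank-one tensors.

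The final step is the passage to the limit. Compactness of $G(\aaa', A^*)$ lets me extract a further subsequence along which $A'(t_n) \to A'$ for some $A' \in G(\aaa', A^*)$. Restriction is jointly continuous in the pair (subspace, tensor), so $T(t_n)|_{A'(t_n) \ot B^* \ot C^*} \to T|_{A' \ot B^* \ot C^*}$, and upper semicontinuity of border rank yields $\ur(T|_{A' \ot B^* \ot C^*}) \leq r - (\aaa - \aaa')$. Rearranging gives Proposition~\ref{subsprop}.

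The main obstacle I anticipate is the lack of direct control on the individual curves $a_i(t)$: they may tend to zero, blow up, or admit no limit direction in $\BP A$. This is precisely why the argument never singles out an individual $a_i(t)$ but rather the $\aaa'$-plane $A'(t)$ that they collectively cut out via dual bases, whose limit is controlled through Grassmannian compactness and upper semicontinuity of $\ur$, rather than through any continuity property of the decomposition itself. A secondary technical point is the initial ``passage to a subsequence'' to stabilise the combinatorial choice of $I(t)$; this could equally well be handled by stratifying the curve $t\mapsto (a_1(t),\dots,a_r(t))$ by which $\aaa$-subsets are independent and working on a single (Zariski-open) stratum.
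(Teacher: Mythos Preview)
The paper does not give a proof of Proposition~\ref{subsprop}; it is quoted from the cited references without argument. Your proof is correct and is essentially the standard one for the border substitution method: approximate $T$ by rank-$r$ tensors, use $A$-conciseness (an open condition) to guarantee that the $A$-factors span $A$ for small $t$, annihilate $\aaa-\aaa'$ of them with a suitable $\aaa'$-plane in $A^*$, and pass to a limit via Grassmannian compactness and upper semicontinuity of border rank.

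The only step that deserves an extra word is the sentence ``restriction is jointly continuous in the pair (subspace, tensor)'': the restricted tensors $T(t_n)|_{A'(t_n)\otimes B^*\otimes C^*}$ live in varying spaces $(A'(t_n))^*\otimes B\otimes C$, so convergence must be interpreted in the total space of the bundle $\mathcal{S}^*\otimes B\otimes C$ over $G(\aaa',A^*)$ (with $\mathcal{S}$ the tautological subbundle), where the border-rank-$\le k$ locus is fiberwise Zariski closed and hence closed in the total space. This is precisely the device behind Lemma~\ref{lem:closed}(1) of the paper (phrased there for the quotient bundle over $\BP V_1$), so your argument fits the paper's framework.
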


\begin{remark} The situation for rank is slightly better than  
for border rank in that one
can choose $A'$ at the price of making a suitable modification of $T$, see \cite{MR3025382,2015arXiv150403732L}.
\end{remark}

We will use the Koszul flattening of \cite{MR3376667}: for $T\in A\ot B\ot C$, define
\be\label{kflat}
T_A^{\ww p}: B^*\ot \La p A\ra \La{p+1}A\ot C
\ene
  by first taking $T_B\ot \Id_{\La p}A: B^*\ot \La p A\ra \La p A\ot A\ot C$, and then projecting to $\La{p+1}A\ot C$.
If $\{ a_i\},\{ b_j\},\{ c_k\}$ are bases of $A,B,C$ and 
  $T=\sum_{i,j,k}t^{ijk}a_i\ot b_j\ot c_k$,
then
\be\label{kflatany}
T_{A}^{\ww p}(\b\ot f_1\ww\cdots \ww f_p)=\sum_{i,j,k} t^{ijk}\b(b_j)a_i\ww f_1\ww\cdots \ww f_p\ot c_k.
\ene
We have \cite{MR3376667}:
\be\label{yfinequal}
\ur(T)\geq \frac{\trank(T_{A}^{\ww p})}{\binom{\aaa-1}{p}}. 
\ene
In practice the map $T_{A}^{\ww p}$ is used after  specializing
$T$ to a subspace of $A$ of dimension $2p+1$ to get a potential   $\frac{2p+1}{p+1}\bbb$ border rank
lower bound.

\section{Proof of Theorem \ref{mmultthm}}
We first observe that the \lq\lq In particular\rq\rq\ assertion follows from the main assertion because,
taking $m=\nnn-c$, we want $c$ such that 
$$
\frac{\nnn\binom{2\nnn-1-c}{\nnn}}{ \binom{2\nnn-2}{\nnn-1}}<1
$$
This ratio is
$$
\frac{ (\nnn-1)\cdots (\nnn-c)}{(2\nnn-2)(2\nnn-3)\cdots (2\nnn-c)}=
\frac{ \nnn-c}{2^{c-1}}\frac{\nnn-1}{\nnn-\frac 22}\frac{\nnn-2}
{\nnn-\frac 32}\frac{\nnn-3}{\nnn-\frac 42}\cdots \frac{\nnn-c+1}{\nnn-\frac c2}
$$
so if $c-1 \geq \tlog_2(\nnn)$ it is less than one.

\medskip

For the rest of the proof, we first introduce notation: 
for a Young diagram $\l$, we picture  it Russian style,
as we think of it as representing entries in the  south-west corner of an $\nnn\times \nnn$ matrix.
More precisely  
for $(i,j)\in\lambda$ we number the boxes of $\lambda$ by pairs (row,column)
however we number the rows starting from $\nnn$, i.e.~$i=\nnn$ is the first row.
For example
$$
\young(xy,z,w)
$$
is labeled $x=(\nnn,1),y=(\nnn,2),z=(\nnn-1,1),w=(\nnn-2,1)$.
Let $U_{\l}:=\tspan\{ u^i\ot v_j \mid (i,j)\in \l\}$ and write
$\Mnwredl:=M_{\langle \nnn,\nnn,\bw\rangle}/U_{\l}$.

The proof consists of two parts. In the first,   we prove by induction
on $k$ that for any $k<\nnn$ there exists a Young diagram $\l$
with $k$ boxes such that
$\ur(\Mnwredl)\leq \ur(\Mnw)-k$.  

In the second part we estimate $\ur(\Mnwredl)$ for any $\lambda$  by reducing to the case when $\lambda$ has just one row (or column).

\bigskip

{\bf Part 1)} First step: $k=1$. By Proposition  \ref{subsprop}
there exists   $a\in \BRe_{\ur(\Mnw)-1}(\Mnw)$ such that the reduced tensor drops border rank. 
The group $\GL(U)\times  \GL(V)\times \GL(W)$ stabilizes $\Mnw$. 
By Lemma \ref{lem:closed}  with  $G_1=\GL(U)\times \GL(V)$,
we may act on  $a$ and pass to the limit. Hence, we may first reduce the rank of $a$ to $1$ and then make it equal $u^\nnn\ot v_1$. 

Second step: We assume that $\ur(\Mnwredlp)\leq \ur(\Mnw)-k+1$, where $\lambda'$ has $k-1$ parts. Again by Proposition  \ref{subsprop}
there exists $a\in \BRe_{\ur(\Mnw)-k}(\Mnwredlp)$ 
such that when we reduce by it the border rank drops. We no longer have the full action of $\GL(U)\times  \GL(V)$. However, the  product of Borel groups that stabilize the flags  induced
by $\l'$
stabilizes $\Mnwredlp$. By the torus action and Lemma \ref{lem:closed} we may assume that $a$ has just one nonzero entry outside of $\lambda$. Further, using the Borel action we can move the entry  south-west  to obtain  the desired Young diagram $\lambda$. 

\bigskip

{\bf Part 2)}
We use \eqref{kflat} and recall that for the matrix multiplication operator, the Koszul flattening factors as
$\Mnw=M_{\langle \nnn, \nnn, 1\rangle}\ot \Id_W$, so
we apply the  Koszul  flattening to $M_{\langle \nnn, \nnn, 1\rangle}\in (U^*\ot V)\ot V^*\ot U$, where $\uuu=\bv=\nnn$.
We need to show that for all $\l$ of size $m$, 
$$
\ur(M_{\langle \nnn, \nnn, 1\rangle}^{\lambda})\geq 2\nnn-1-
\frac{\binom{\nnn-1+m}{m-1}}
{\binom{2\nnn-1}{\nnn-1}}.
$$
We will accomplish this  by projecting  to a suitable $p_{\tilde A}:A\rightarrow \tilde A$  of dimension $2\nnn-1$,
such that 
$$\trank ([p_{\tilde A}(M_{\langle \nnn, \nnn, 1\rangle}^{\lambda}) )]^{\ww \nnn-1}_{\tilde A}
 \geq
\binom{2\nnn-1}{\nnn-1}\nnn -  
 \binom{\nnn-1+m}{m-1} 
$$
and then apply \eqref{yfinequal}.  By our choice of   basis we may   consider $M_{\langle \nnn, \nnn, 1\rangle}^{\lambda}\in (A/U_\lambda)\ot B\ot C$ 
in $A\ot B\ot C$, with specific coordinates equal to $0$.
We need to show
$$
\tdim \tker ([p_{\tilde A}(M_{\langle \nnn, \nnn, 1\rangle})^{\lambda} ]^{\ww \nnn-1}_{\tilde A})\leq 
\binom{\nnn-1+m}{m-1}.
$$

  Consider the map $\phi:A\ra \BC^{2\nnn-1}$ given by $u^i\ot v_j\mapsto e_{i+j-1}$. The rank of the reduced Young flattening 
  $\La {n-1} \BC^{2\nnn-1}\ot V\ra \La{n}\BC^{2\nnn-1}\ot U$ could only go down. However, for $M_{\langle \nnn,\nnn,1\rangle}$,
  as was shown in \cite{MR3376667,2016arXiv160108229L},
  the new  map is surjective. We  recall the argument
  from \cite{2016arXiv160108229L}, as a similar argument will finish the proof. 

Write $e_S=e_{s_1}\ww\cdots \ww e_{s_{\nnn-1}}$, where $S\subset [2\nnn-1]$ has cardinality $\nnn-1$.
For  $1\leq\eta\leq \nnn$ the reduced Koszul  flattening is given by: 
$$
e_S\ot v_\eta\mapsto \sum_{j=1}^\nnn  \phi(u^j\ot v_\eta)\ww e_S\ot u_j 
= \sum_{j=1}^\nnn e_{j+\eta-1}\ww e_S\ot u_j.$$

We index a basis of the source by pairs $(S,k)$, with $k\in [\nnn]$, and the target by $(P,l)$ where
$P\subset [2\nnn-1]$ has cardinality $\nnn $ and $l\in[\nnn]$.
Define an order   on the target basis vectors as follows: For $(P_1,l_1)$ and $(P_2,l_2)$, set $l=\tmin\{ l_1,l_2\}$, and declare
$(P_1,l_1)<(P_2,l_2)$ if and only if
\begin{enumerate}
\item In lexicographic order,  the set of $l$ minimal elements of $P_1$ is strictly after  the set of $l$ minimal elements of $P_2$ (i.e.~the smallest element of $P_2$ is smaller than the smallest of $P_1$ or they are equal and the second smallest of $P_2$ is smaller or equal etc.~up to $l$-th),  or
\item the $l$ minimal elements in $P_1$ and $P_2$ are the same, and $l_1<l_2$. 
\end{enumerate}
In \cite{2016arXiv160108229L} we showed that when  one   orders the basis as above,   the reduced Koszul  flattening for $\Mn$ has an upper 
triangular structure. More explicitly, let $P=(p_1\hd p_{\nnn})$ with $p_i<p_{i+1}$. Identifying basis vectors with their indices,
the image of $(P\backslash \{ p_l\}, 1+p_l-l)$ is $\pm (P,l)$ plus smaller terms in the order. 
The crucial part is to control how the projection   of $\Mnwredl$ 
to the complement of $ u^j\ot v_{n+1-i}$  effects the reduced Koszul  flattening. We  determine  the number of additional zeros on the diagonal. 
Note that $(P,l)$ will not appear as the leading term any more if and only if $l=j$ and $n+1-i+j-1=p_l$. 
Hence, the number of additional zeros on the diagonal equals the number of $n$ element subsets of $[2n-1]$
that have the $j$-th entry equal to $n-i+j$, which is ${{n-i+j-1}\choose{j-1}}{{n+i-j-1}\choose{i-1}}:=g(i,j)$. So
it is enough to prove that $\sum_{(i,j)\in\lambda}g(i,j)\leq \binom{\nnn-1+m}{ m-1}$. Note that
$\sum_{i=1}^m g(i,1)=\sum_{j=1}^m g(1,j)=\binom{\nnn-1+m}{ m-1}$. Thus we have to prove that the Young diagram that maximizes $f_\lambda:= \sum_{(i,j)\in\lambda}g(i,j)$ has one row or column.
We prove it inductively on the size of $\lambda$, the case $|\lambda|=1$ being trivial. 

Suppose now that $\lambda=\lambda'+(i,j)$. By induction it is sufficient to show that:
\begin{equation}\label{ineq:main}
g(1,ij)={{\nnn-1+ij-1}\choose{ij-1}}\geq {{\nnn-j+i-1}\choose{i-1}}{{\nnn-i+j-1}\choose{j-1}}=g(i,j),
\end{equation}
where $\nnn>ij$.  
Without loss of generality we may assume $2\leq i\leq j$. For $j=2,3$ the inequality is straightforward to check, so we assume $j\geq 4$.
We prove the inequality \ref{ineq:main} by induction on $\nnn$.
For $\nnn=ij$ the inequality follows from the combinatorial interpretation of binomial coefficients and the fact that the middle one is the largest.

We have ${{\nnn+1-1+ij-1}\choose{ij-1}}={{\nnn-1+ij-1}\choose{ij-1}}\frac{\nnn-1+ij}{\nnn}$, ${{\nnn+1-j+i-1}\choose{i-1}}={{\nnn-j+i-1}\choose{i-1}}\frac{\nnn-j+i}{n-j+1}$ and ${{\nnn+1-i+j-1}\choose{j-1}}={{\nnn-i+j-1}\choose{j-1}}\frac{\nnn-i+j}{\nnn-i+1}$.
By induction it is enough to prove that:
\begin{equation}\label{eq:want}\frac{n-1+ij}{n}\geq \frac{n-j+i}{n-j+1}\frac{n-i+j}{n-i+1}.\end{equation}
This is equivalent to:
$$ij-1\geq \frac{n(i-1)}{n-j+1}+\frac{n(j-1)}{n-i+1}+\frac{n(i-1)(j-1)}{(n-j+1)(n-i+1)}.$$
As the left hand side  is independent from $n$ and each fraction on the right hand side decreases with growing $n$, we may set $n=ij$ in inequality \ref{eq:want}. Thus it is enough to prove:
$$2-\frac{1}{ij}\geq (1+\frac{i-1}{ij-j+1})(1+\frac{j-1}{ij-i+1}).$$
Then the inequality is straightforward to check for $i=2$, so we assume $i\geq 3$. Then:
$$(1+\frac{i-1}{ij-j+1})(1+\frac{j-1}{ij-i+1})\leq
(1+\frac{j-1}{j^2-j+1})(1+\frac{j-1}{3j-2})\leq\frac{16}{13}\cdot\frac{4}{3}=\frac{64}{39}.
$$
However,
$$\frac{64}{39}\leq 2-\frac{1}{12}\leq 2-\frac{1}{3j}\leq 2-\frac{1}{ij},$$
which finishes the proof.

\begin{remark} Note that we made two kinds of restrictions:
\begin{enumerate}
\item projecting $A$ to $A/U_{\lambda}$ and
\item projecting $A/U_{\lambda}$ to $\tilde A$.
\end{enumerate}
The first one corresponds to deleting rows (specified by $\lambda$) in the matrix representation of $M_{\langle \nnn, \nnn, 1\rangle}$. The second one takes $2\nnn-1$ linear combinations of rows as explained below. 
 
Since linear  projections  commute, one might
try to first apply the second projection and then the first one.  This is not feasible
for two reasons. First, after applying the second projection we lose   symmetry.
Second, our method  removes whole rows in the matrix representation of the tensor in the first projection
(not just specific entries). Hence it is much better to  first remove rows (when the matrix has mostly zeros) and then use the second projection, than to remove rows when the matrix is dense (after the second projection).
\end{remark}

 \section{Compression of tensors: the limits of the substitution method} \label{comprsect}

Consider the product of Grassmannians $\bold G:=G(\aaa',A^*)\times G(\bbb',B^*)\times G(\ccc',C^*)$ with three projections $\pi_i$.
Let 
$\EE=\EE(\aaa',\bbb',\ccc'):=\bigotimes_{i=1}^3\pi_i^*(\cS_i)$ 
be the vector bundle that is the tensor product of the pullbacks of universal subspace bundles $\cS_i$.
  Let   $\p \ra \bold G$ denote the projective bundle with   fiber over   $(A',B',C')$ equal to $Seg(\BP A'\times\BP 
  B' \times\BP C')$, so $\p\subset \BP \EE$.

\begin{definition}
A tensor $T\in A\ot B\ot C$ is {\it  $(\aaa',\bbb',\ccc')$-compression generic (cg)} if there are no subspaces $A'\subset A^*, B'\subset B^*, C'\subset C^*$ of respective dimensions $\aaa',\bbb',\ccc'$ such that $T|_{A'\ot B'\ot C'}=0$, i.e.,
for all $(A',B',C')\in \bold G$, $A'\ot B'\ot C'\not\subset T\upperp$, where 
$T\upperp \subset (A\ot B\ot C)^*$ is the hyperplane annihilating $T$.

Let $X(\aaa',\bbb',\ccc')$ be the set of all tensors that are not $(\aaa',\bbb',\ccc')$-cg.
\end{definition}

\begin{proof}[ Proof of Proposition \ref{prop:fill}]
Let 
$$Y:=\{ (y,[T])\in  \bold G\times \BP (A\ot B\ot C) \mid  \EE_y\subset T\upperp\}.
$$
 Each fiber of the projection $Y\rightarrow \bold G$ is a projective space of dimension $\aaa\bbb\ccc-\aaa'\bbb'\ccc'-1$, so 
$$\dim Y:=(\aaa\bbb\ccc-\aaa'\bbb'\ccc'-1)+(\aaa-\aaa')\aaa'+(\bbb-\bbb')\bbb'+(\ccc-\ccc')\ccc'.$$
On the other hand $X(\aaa',\bbb',\ccc')$ is the projection of $Y$ to $\BP (A\ot B\ot C)$, which proves both claims.
\end{proof}

\bigskip
  
\begin{corollary} \ 
\begin{enumerate}
\item If (\ref{surjdim}) holds then a generic tensor is $(\aaa',\bbb',\ccc')$-cg.
\item If (\ref{surjdim}) does not hold then $\rank \EE^*\leq \dim G(\aaa',A^*)\times G(\bbb',B^*)\times G(\ccc',C^*)$. If the top Chern class of $\EE^*$ is nonzero, then no tensor is   $(\aaa',\bbb',\ccc')$-cg.
\end{enumerate}
\end{corollary}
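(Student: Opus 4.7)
The plan is to identify $(\aaa',\bbb',\ccc')$-compression generic tensors with nowhere-vanishing sections of $\EE^*$. A tensor $T\in A\ot B\ot C$, regarded as a linear form on $A^*\ot B^*\ot C^*$, determines an algebraic section $s_T$ of $\EE^*\to\bold G$ whose value at $y=(A',B',C')$ is $T|_{A'\ot B'\ot C'}\in (A'\ot B'\ot C')^*=\EE^*_y$. Directly from the definition, $T$ is $(\aaa',\bbb',\ccc')$-cg if and only if $s_T$ vanishes nowhere on $\bold G$. Assertion (1) is then immediate from Proposition \ref{prop:fill}: under hypothesis \eqref{surjdim} the set $X(\aaa',\bbb',\ccc')$ is a proper Zariski closed subset of $\BP(A\ot B\ot C)$, so its complement is Zariski dense and the generic tensor is cg.

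For the first half of (2) I would simply compute $\rank\EE^*=\aaa'\bbb'\ccc'$ (since $\EE$ is the tensor product of three universal subspace bundles of ranks $\aaa',\bbb',\ccc'$) together with $\dim\bold G=(\aaa-\aaa')\aaa'+(\bbb-\bbb')\bbb'+(\ccc-\ccc')\ccc'$; subtracting these gives $\dim\bold G-\rank\EE^*=\aaa\aaa'+\bbb\bbb'+\ccc\ccc'-(\aaa')^2-(\bbb')^2-(\ccc')^2-\aaa'\bbb'\ccc'$, so the failure of \eqref{surjdim} is literally equivalent to $\rank\EE^*\leq\dim\bold G$.

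For the second half of (2) I would argue by contradiction. Suppose some $T$ is $(\aaa',\bbb',\ccc')$-cg, so that $s_T$ is nowhere vanishing on $\bold G$. Then $s_T$ defines an inclusion of vector bundles $\mathcal{O}_{\bold G}\hookrightarrow\EE^*$, and the resulting short exact sequence $0\to\mathcal{O}_{\bold G}\to\EE^*\to Q\to 0$ has $Q$ locally free of rank $\rank\EE^*-1$. The Whitney sum formula then yields $c(\EE^*)=c(\mathcal{O}_{\bold G})c(Q)=c(Q)$, so in particular the top Chern class satisfies $c_{\rank\EE^*}(\EE^*)=c_{\rank\EE^*}(Q)=0$ because $Q$ has rank strictly smaller than $\rank\EE^*$. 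This contradicts the hypothesis. The main subtle step is verifying that the pointwise nonvanishing of $s_T$ upgrades the map of sheaves $\mathcal{O}_{\bold G}\to\EE^*$ to a genuine sub-bundle inclusion (so that $Q$ is locally free); this is standard but needs to be mentioned. Once that is in place everything reduces to invoking Proposition \ref{prop:fill} and the Whitney formula.
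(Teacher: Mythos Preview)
Your proof is correct and follows the same approach as the paper: both identify the cg condition with nowhere-vanishing of the induced section of $\EE^*$, derive (1) from Proposition~\ref{prop:fill}, and deduce (2) from the fact that a nowhere-vanishing section forces the top Chern class to vanish. You give more detail than the paper does---the explicit dimension count for the first half of (2) and the Whitney-formula justification for the second half---but the underlying argument is the same.
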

 
\begin{proof}
The first assertion  is a restatement of Proposition \ref{prop:fill}.

For the second,  notice that $T$   induces a section $\tilde T$ of the vector bundle $\EE^*\ra \bold G$.
The zero locus of $\tilde T$ is $ \{ (A',B',C')\in \bold G \mid  A'\ot B'\ot C'\subset T\upperp\}$. In particular, $\tilde T$ is non-vanishing if and only if $T$ is $(\aaa',\bbb',\ccc')$-cg. If the top Chern class is nonzero, there cannot exist a non-vanishing section.
\end{proof}

\begin{example}\label{ex:eq} Let $\aaa=\bbb=\ccc $ and $\aaa'=\bbb'=\ccc' $.
Then we get non-trivial equations as long as 
$$
\aaa'\geq \lceil \sqrt{3\aaa +\frac 94}-\frac 32\rceil .
$$ 
Thus by this method alone, one potentially gets
border rank equations in $\BC^\aaa\ot\BC^\aaa\ot \BC^\aaa$ up to
$$
3(\aaa - \lceil(\sqrt{3\aaa +\frac 94}-\frac 32)\rceil).
$$
For example, if $\aaa=9$, we may take $\aaa'=4$ and get equations
up to $\s_{15}$.  
\end{example}

\begin{example}
Let $\aaa=\bbb=\ccc=3$. As pointed out by Kileel, the variety $X(2,2,3)$ equals the trifocal variety. 
By the results of Aholt-Oeding \cite{MR3223346} the ideal of this variety is defined by 10 cubics, 81 quintics and 1980 sextics.
\end{example}



In each particular case when there are a finite number of $A'\ot B'\ot C'$ annhilating a generic $T$, 
we may explicitly compute how many different $A'\ot B'\ot C'$ a generic hyperplane
may contain as follows:  
  The Chern polynomial of the dual of the universal bundle is $\sum_{j=0}^k p_{1^j}t^j$, where $p_{1^j}$ is the class corresponding to the Young diagram $1^j$. These classes multiply by the Littlewood-Richardson rule   (in our cases this is the iterated Pieri rule).
  
\begin{example}\label{jasonex}  Let $\aaa=\bbb=\ccc=5$ and $\aaa'=2,\bbb'=1,\ccc'=5$. The bundle $\EE^*$ has rank ten: it is tensor product of a rank $2$ bundle (for $\aaa'$), rank $1$ bundle (for $\bbb'$) and the trivial rank $5$ bundle (for $\ccc'$). 
This example already appeared
in \cite{MR3166392}.  Here  $\bold G= G(2,5)\times \PP^5$ as the last 
Grassmannian degenerates to a point. 
The second Chern class of the tensor product of pull-backs equals:
$$c_2(\pi_1^*(\cS_1)\ot\pi_2^*(\cS_2))=(\yng(1,1),1)+(\yng(1),\yng(1))+(1,\yng(1))^2,$$
where respective Young diagrams represent Schubert classes on $G(2,5)$ and $\PP^5$. 
E.g.~$(1,\yng(1))$ is $G(2,5)$ times a hyperplane in $\PP^5$. 
To compute the top Chern class of $\EE^*$ we need to compute the $5$-th power of the above expression. It will be proportional to the class of a point $(\yng(3,3),\yng(4))$ and we just have to compute the coefficient.

We get the following contributions:
\begin{itemize}
\item $5(\yng(1,1),1)(\yng(1),\yng(1))^4=5\cdot 2=10$.
Indeed, on the second coordinate corresponding to $\PP^5$ we just have to fill, one by one starting from left, the diagram $\yng(4)$. On $G(2,5)$ we must start by filling the two left most entries, by the contribution of $(\yng(1,1),1)$ obtaining:
$\young(xoo,xoo).$ The remaining square (filled with $o$ before) has to be filled with four unit squares. There are two ways to do this: $\young(12,34)$ and $\young(13,24)$.

\item $5\binom 42 (\yng(1,1),1)^2(\yng(1),\yng(1))^2(1,\yng(1))^2=30$, because
there is a unique way here,
\item 
$\binom 52$ corresponding to 
$(\yng(1,1),1)^3(1,\yng(1))^4$.
\end{itemize}
This gives the grand total of $50$.
Hence, in this case the map $Y\rightarrow \p(A\ot B\ot C)$ is surjective, finite with generic fiber of degree $50$.
\end{example}

\bibliographystyle{amsplain}
 
\bibliography{Lmatrix}

\end{document}